\def\BState{\State\hskip-\ALG@thistlm}
\algnewcommand\True{\textbf{true}\space}
\algnewcommand\False{\textbf{false}\space}
\algnewcommand\Continue{\textbf{continue}\space}
\algnewcommand\Not{\textbf{not}\space}
\newcommand{\ubar}[1]{\underaccent{\bar}{#1}}
\newtheorem{theorem}{Theorem}
\newtheorem{lemma}{Lemma}
\newtheorem{problem}{Problem}
\newtheorem{definition}{Definition}
\title{\LARGE \bf
	Resilient Supervisory Control of Autonomous Intersections in the Presence of Sensor Attacks
}
\author{Amin Ghafouri and Xenofon D. Koutsoukos}
\begin{document}

\maketitle
\thispagestyle{empty}
\pagestyle{empty}

\begin{abstract}
	Cyber-physical systems (CPS), such as autonomous vehicles crossing an intersection, are vulnerable to cyber-attacks and their safety-critical nature makes them a target for malicious adversaries. This paper studies the problem of supervisory control of autonomous intersections in the presence of sensor attacks. Sensor attacks are performed when an adversary gains access to the transmission channel and corrupts the measurements before they are received by the decision-making unit. We show that the supervisory control system is vulnerable to sensor attacks that can cause collision or deadlock among vehicles. To improve the system resilience, we introduce a detector in the control architecture and focus on stealthy attacks that cannot be detected but are capable of compromising safety. We then present a resilient supervisory control system that is safe, non-deadlocking, and maximally permissive, despite the presence of disturbances, uncontrolled vehicles, and sensor attacks. Finally, we demonstrate how the resilient supervisor works by considering illustrative examples.
\end{abstract}

\section{Introduction}\label{sec1}
Cyber-physical systems (CPS) are vulnerable to cyber-attacks and their safety-critical nature makes them a target for malicious adversaries. In this paper, we study the problem of supervisory control of autonomous intersections in the presence of sensor attacks. Sensor attacks are performed when an adversary gains access to the transmission channel and corrupts the measurements before they are received by the decision-making unit. Such attacks can put the system at risk by interfering with the operation of the control system. 

Recent advances in transportation technology indicate that autonomous intersections will be possible in the near future. The aim of autonomous intersections is to provide a safe, scalable, and efficient framework for coordinating autonomous vehicles. In the literature, there exist numerous protocols for autonomous intersections \cite{dresner:multiagent, azimi:reliable, kowshik:provable}. Among these approaches, studying the problem of autonomous intersection in the context of supervisory control of discrete event systems has received increasing attention since it allows incorporating the continuous dynamics and formally analyzing system safety \cite{colombo:supervisory, dallal:discrete, dallal:imperfect, ahn:supervisory}.

Supervisory control for autonomous intersections has three requirements: (1) safety, i.e., collisions must be avoided; (2) non-blockingness, i.e., vehicles should not deadlock; and (3) maximal-permissiveness, i.e., vehicles must not be restricted unless necessary \cite{cassandras:introduction}. A supervisory controller that satisfies these requirements is designed for a set of controlled and uncontrolled vehicles considering an unknown disturbance \cite{dallal:discrete}. This is done by creating a finite discrete event system abstraction of the continuous system and formulating the problem in the context of supervisory control for discrete event systems. Further, the supervisor is reconstructed to handle bounded measurement uncertainties \cite{dallal:imperfect}.

In this paper, we study the resilience of the system when deception attacks are performed on sensor measurements by a malicious omniscient adversary \cite{cardenas:survivable}. The problem is related to supervisory control with imperfect measurements \cite{dallal:imperfect}. However, sensor attack can be designed strategically and may have different effects from imperfect measurements. 




Concerns about resiliency of supervisory control systems is not recent as it has been addressed in the area of fault-tolerant control \cite{sampath:failure, zhao:monitoring}. An approach for monitoring and diagnosis of hybrid systems that integrate model-based and statistical methods is proposed in \cite{zhao:monitoring}. Further, a framework for fault-tolerant control of discrete event systems is designed that ensures the system recovers from any fault \cite{wen:framework}. However, the focus is on known faults rather than unknown and carefully designed cyber-attacks. 

Cyber-security of control systems has been studied for different classes of systems. It is shown that knowledge of the physical system can be used for attack detection by considering malicious attacks on process control systems \cite{cardenas:process}. Denial-of-service attacks are studied for discrete-time linear dynamical systems \cite{amin:safe}. Further, detection limitations for CPS are characterized in \cite{pasqualetti:attack}. In particular, it is shown that an attack is undetectable if the measurements due to the attack coincide with the measurements due to some nominal operating condition. 

Our objective is to study the problem of resilient supervisory control of autonomous intersections in the presence of sensor attacks. We introduce a detector in the control architecture with the purpose of detecting deception attacks on sensors. This detector incorporates knowledge of the physical system with the previously received data and detects a wide range of sensor attacks. Nevertheless, there exist stealthy attacks that cannot be detected but are capable of compromising safety. To address this issue, we design a resilient supervisor that maintains safety even in the presence of stealthy attacks. The resilient supervisor consists of an estimator that computes the smallest state estimate compatible with the control inputs and measurements seen thus far. We prove that the estimated set contains the true state of the system even if the measurements are corrupted and we show that the set of estimated states depends on the detector threshold. We then formulate the resilient supervisory control problem and present a solution. We demonstrate how the resilient supervisor works by considering illustrative examples. 

The remainder of this paper is organized as follows. Section \ref{sec2} defines the system model and sensor attacks, followed by formulation of the problem. In Section \ref{sec3}, we introduce a detector in the system architecture and characterize stealthy attacks. We design the resilient supervisor in Section \ref{sec4}. In Section \ref{sec5}, we illustrate how the resilient supervisor works by providing some examples. We conclude the paper in Section \ref{sec6} with discussion and future work.

\section{System Model}\label{sec2}
\subsection{Plant}
We use the system model presented in \cite{dallal:discrete}. Consider a set $\mathcal{N} = \{1, ..., n\}$ of vehicles driving towards an intersection. The vehicles are modeled as single integrators and their dynamics are described by
\begin{equation}\label{ct}
	\dot{x} = v + d
\end{equation}
where $x\in X \subset \mathbb{R}^n$ is the state (i.e., position), $v \in V \subset \mathbb{R}^n$ is the control input (i.e., speed), and $d \in D \subset \mathbb{R}^n$ is a disturbance input representing unmodeled dynamics. Assume $v \in V$ is a vector with elements in the finite set $\{\mu a, \mu(a+1), ..., \mu b\}$, where $\mu \in \mathbb{R}^+$ and $a, b \in \mathbb{N} $. We refer to $a\mu$ and $b\mu$ as $v_{min}$ and $v_{max}$. The disturbance signal is bounded by $[d_{min}, d_{max}]^n$, and $v_{min} + d_{min} \geq \mu$, so that $\mu$ is the lowest possible speed. We assume that a subset of the vehicles are uncontrolled. To represent uncontrolled vehicles, we partition the vector $v$ into two subvectors $v_c \in V_c$ and $v_{uc} \in V_{uc}$, such that $v = (v_c, v_{uc})$ and $V = V_c \times V_{uc}$, where $v_c$ denotes the inputs of the controlled vehicles and $v_{uc}$ denotes the inputs of the uncontrollable vehicles.

Assume the control input is kept constant over each time interval $[k\tau, (k+1)\tau]$. Define the time $\tau$ discretization of the system \eqref{ct} as
\begin{equation}\label{discrete}
x_{k+1} = x_k + u_k + \delta_k
\end{equation}
where $ x_k = x(k\tau)$, $u_k=v(k\tau)\tau $, and $\delta_k = \int_{k\tau}^{(k+1)\tau}d(t)dt$. Define $U = V\tau$ and $\Delta = D\tau$, and we have $u\in U$ and $\delta \in \Delta$. Also, let $U = U_c \times U_{uc}$, where $U_c$ is the set of inputs for the controllable vehicles and $U_{uc}$ is the set of inputs for the uncontrollable vehicles. Denote by $u = (u_c, u_{uc})$ the actions of the controllable and uncontrollable vehicles for any $u \in U$.

\begin{figure}
	\begin{center}
		\includegraphics[width=5.5cm]{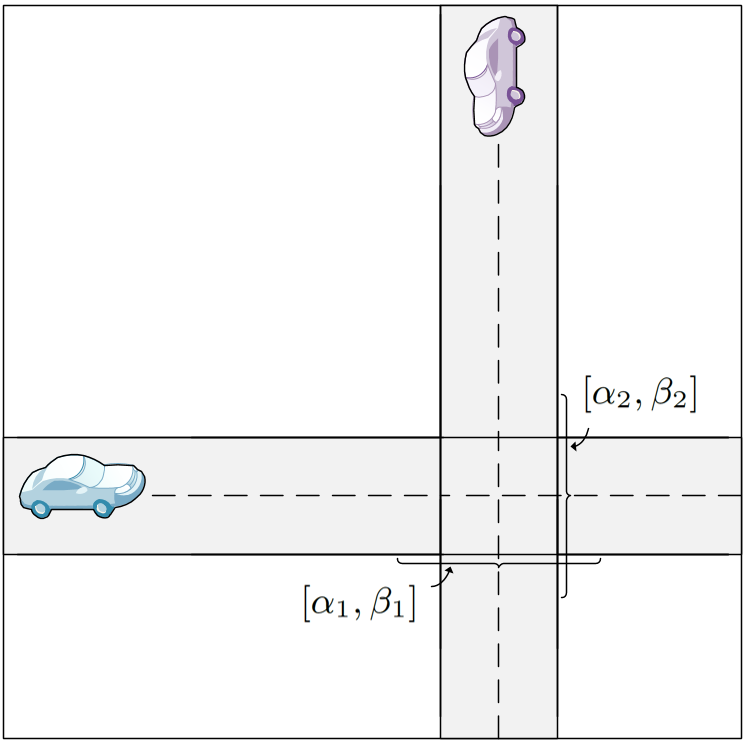}
		\caption{Two-vehicle autonomous intersection}
		\label{fig:int}
	\end{center}
\end{figure}


For each road $l$, let $[\alpha_l, \beta_l] \subset \mathbb{R}$ denote the part of the road that is in the intersection as shown in Fig. \ref{fig:int}. Let $i \in \Pi_l$, if vehicle $i$ drives along road $l$. A collision occurs between two vehicles $i \in \Pi_l$ and $j \in \Pi_{l'}$ with $ l \not = {l'}$, whenever $x_i \in [\alpha_l, \beta_l] \; \textrm{and} \; x_j \in [\alpha_{l'}, \beta_{l'}]$ simultaneously. Further, assuming $\gamma \in \mathbb{R}$ is the minimum safe distance between vehicles driving on the same road, a rear-end collision occurs when $|x_i - x_j| < \gamma$, $x_i \leq \beta_l$, $x_j \leq \beta_l$, and $i,j \in \Pi_l$. Calling the set of all the collision points the bad set $B \subset X$, no collision occurs if the system never reaches a state in $B$. A trajectory $x(t)$ of \eqref{ct} is defined safe, if:
\begin{equation}\label{safety}
\inf_{t \geq 0, b'\in B} \: || x(t)-b'||_{\infty} > 0
\end{equation}

\subsection{Sensor Attacks and Problem Formulation}
Sensor attacks are performed when an adversary compromises some sensors and corrupts the measurements before they are received by the decision-making unit \cite{cardenas:survivable}. Such attacks, which compromise integrity of the measurements, may lead to collision or deadlock among vehicles. To model sensor attacks, we assume $T_a=[k_s,k_e)$ is the period of attack and $e_i(k\tau) \in E \subseteq \mathbb{R}$, is an error added to sensor $i$ at timestep $k \in T_a$. Defining $e(k\tau)$ as the vector $(e_1(k\tau),...,e_n(k\tau))$, the corrupted measurement vector $\tilde{x}$ is described by
\begin{equation}\label{error}
\tilde{x}(k\tau)=x(k\tau)+ e(k\tau), \; k \in T_a
\end{equation}

We assume there exists $T_{max} \in \mathbb{R}^+$ such that  $|T_a| \leq T_{max}$. Because of the non-blockingness property, vehicles eventually cross the intersection, and therefore scenarios involving specific cars and intersections have a finite-time horizon. Since the sensor attack is launched on individual vehicles, the maximum attack duration, $T_{max}$, can be assumed to be finite. This variable can be selected as the maximum time it takes for a single vehicle to cross the intersection, which can easily be computed given the minimum speed (i.e., $v_{min} + d_{min}$) and the set of initial states. It should be noted that this implies an assumption that the initial state of a car is not corrupted. Although this assumption is restrictive, it is reasonable since detection of cars approaching intersections can be performed also with other means than the vehicles sensors. Also, if this assumption is not true an adversary can add "zombie" cars (i.e., cars that do not exist) or make existing cars "invisible", in which case, the detection and control problems considered in the paper are not well-defined.


Our goal is to design a supervisory controller that maintains the system's operational goals even in the presence of sensor attacks. These operational goals are: (1) safety, i.e., collisions must be avoided; (2) non-blockingness, i.e., vehicles must eventually cross the intersection; and (3) maximal-permissiveness, i.e., vehicles must not be restricted unless necessary \cite{dallal:discrete}. 

\section{Attack Detection}\label{sec3}

To be resilient to sensor attacks, a common approach is to employ detection mechanisms that detect attacks before they can cause significant damage. 



\subsection{Detector}
We use a nonparametric Cumulative sum (CUSUM) statistic as our detection method \cite{basseville:detection}. The detector's goal is to determine whether a sequence of vectors $z(1), z(2), ..., z(N)$ is generated through the system being under normal behavior ($H_0$) or attack ($H_1$). Similar to \cite{cardenas:process}, to formulate the detector and to avoid making limiting assumptions on the attacker's behavior, we assume the expected value of a random process $Z(k)$ that generates $z(k)$ is less than zero in the case of normal behavior, and greater than zero in the case of attack.

We denote by $\textbf{Post}_{u_c} x$, the set of all the reachable states given a state $x$ and an input $u_c$, that is, $\textbf{Post}_{u_c} (x) := \cup_{u_{uc}\in U_{uc}, d \in D} \; (x + u + d)$. Given a measurement $\tilde{x}_i(k)$, the parameter $z_i(k)$ is defined for every vehicle $i$ as
\begin{equation}\label{detectorz}
z_i(k):=\inf_{\hat{x}_i(k) \in \textbf{Post}_{u_i}(\tilde{x}_i(k-1))} \: ||\tilde{x}_i(k) - \hat{x}_i(k)|| - b_i
\end{equation}
where $b_i \in \mathbb{R}$ is a small constant chosen in a way that $E_{H_0}[z_i(k)]<0$. 

The CUSUM statistic is described by:
\begin{equation}\label{detectorstat} 
C_i(k)= (C_i(k-1) + z_i(k))^+
\end{equation}
where $C_i(0)=0$, and $(a)^+ = a $ if $a \geq 0 $.  

Assigning $\eta_i$ as the threshold chosen based on a desired false alarm rate, the corresponding decision rule is defined as
\begin{equation}\label{detectordecision}
d(C_i(k)) = \left\{ \begin{array}{rcl}
H_1 & \textrm{  if } C_i(k)>\eta_i \\ H_0 & \textrm{otherwise} \\
\end{array}\right.
\end{equation}

We assume that upon detection (i.e., $d(C_i(k)) = H_1$ for a $k$), the system switches to a mode where vehicles are either driven manually or controlled by a fail-safe controller. 

\subsection{Stealthy Attacks}
Although the detector can identify attacks that cause a large difference between received measurements and predicted states (i.e., detectable attacks), a skilled attacker may attempt to evade this scheme via performing \textit{stealthy attacks} \cite{cardenas:process}. This is done by keeping the detector's output below the threshold of detection and not triggering an alert. Stealthy attacks exist in our system because of the following factors:\\
\textit{(i) Detector's threshold}: As shown in Fig. \ref{fig:stealthyspace}, small thresholds (i.e., $\eta \rightarrow 0)$ reduce the size of stealthy attacks space. This means a system designer should always select the smallest possible threshold to achieve maximum resiliency. However, employment of detectors with small thresholds results in high false alarm rates in realistic noisy environments. This is why thresholds are often chosen non-zero (i.e., $\eta \not = 0$), which allows an adversary to launch stealthy attacks.\\
\textit{(ii) Disturbance and uncontrolled vehicles}: Disturbance and uncontrolled vehicles add non-deterministic behavior to the system. This makes it impossible to perform exact state estimation. Hence, as long as the measurements due to the attack cannot be differentiated from the measurements due to some normal operation, the attack remains stealthy \cite{pasqualetti:attack}.

\begin{figure}
	\begin{center}
		\includegraphics[width=5.8cm]{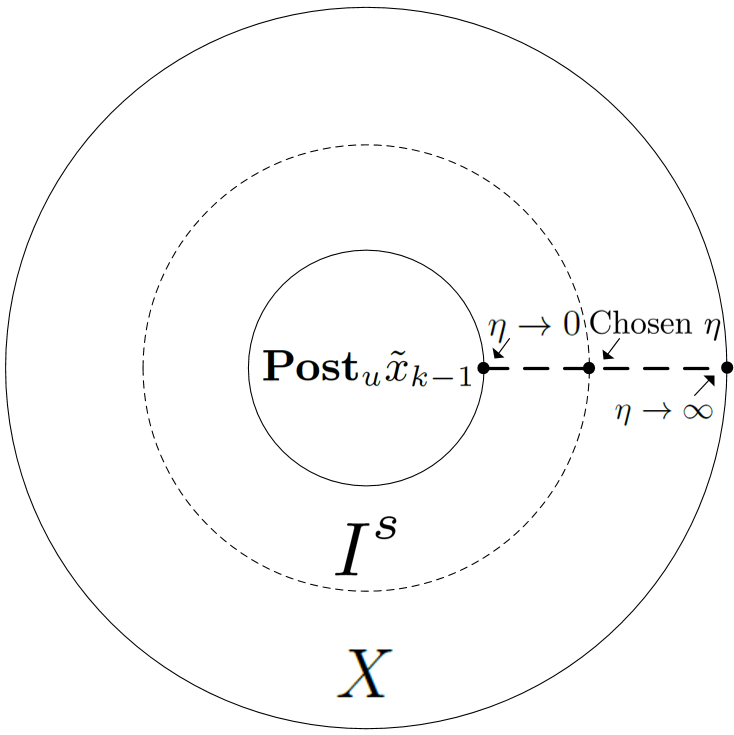}
		\caption{Choosing a small threshold for the detector reduces size of stealthy attack space $I^s$.}
		\label{fig:stealthyspace}
	\end{center}
\end{figure}



Stealthy attacks must be characterized and countermeasures for securing the system against them must be designed. To do so, we characterize stealthy attacks for the detector \eqref{detectordecision}. We do this by deriving a condition for error vector $e_k$ given the actual state $x(k\tau)$, which is only known by the attacker. We define $\hat{x}_{max}(k\tau) := \max \textbf{Post}_{u_{k-1}}(\tilde{x}_{k-1})$ and $\hat{x}_{min}(k\tau) := \min \textbf{Post}_{u_{k-1}}(\tilde{x}_{k-1})$ \footnote{We temporarily abuse notation and exclude $\tau$ when referring to a time $k\tau$.}.

\begin{lemma}\label{stealthyattack}
	(Stealthy Attacks): A sensor attack with $e_T \in E^{n\times|T|}$ is stealthy iff for all $k\in T_a$ we have $e_k \in [\ubar{e}_k, \bar{e}_k]$ where
	\begin{equation*}
	\ubar{e}_k(x(k), C(k-1)) = - x(k) + \hat{x}_{min}(k) - \eta - b + C(k-1),
	\end{equation*}
	\begin{equation*}
	\bar{e}_k(x(k), C(k-1)) = - x(k) + \hat{x}_{max}(k) + \eta + b - C(k-1).
	\end{equation*}
	
\end{lemma}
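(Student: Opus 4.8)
The plan is to show, coordinate by coordinate, that the box constraint $e_k\in[\ubar{e}_k,\bar{e}_k]$ is exactly the condition keeping the CUSUM statistic \eqref{detectorstat} at or below its threshold at step $k$, and then to chain these per-step conditions over $k\in T_a$. Throughout, ``stealthy'' is read as: the decision rule \eqref{detectordecision} returns $H_0$ at every step, i.e.\ $C_i(k)\le\eta_i$ for all $i$ and all $k$, with $C_i(k_s-1)\le\eta_i$ (the detector is silent before the attack, otherwise the scenario has already switched to the fail-safe mode).

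First I would reduce to a single coordinate $i$ and drop the index, since $C_i$, $\eta_i$ and $b_i$ act componentwise. By the definitions of $\hat{x}_{min}(k)$ and $\hat{x}_{max}(k)$, the set $\textbf{Post}_{u_i}(\tilde{x}(k-1))$ is the interval $[\hat{x}_{min}(k),\hat{x}_{max}(k)]$, so the infimum in \eqref{detectorz} equals the distance from $\tilde{x}(k)$ to that interval: $z(k)=\mathrm{dist}(\tilde{x}(k),[\hat{x}_{min}(k),\hat{x}_{max}(k)])-b$. Next comes the key local equivalence: since $\eta\ge0$ and $(a)^+=\max(a,0)$, $C(k)=(C(k-1)+z(k))^+\le\eta$ holds iff $C(k-1)+z(k)\le\eta$. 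Substituting $z(k)$ turns this into $\mathrm{dist}(\tilde{x}(k),[\hat{x}_{min}(k),\hat{x}_{max}(k)])\le R_k$ with $R_k:=\eta+b-C(k-1)$. Whenever $R_k\ge0$ --- which holds as soon as $C(k-1)\le\eta$, since $b\ge0$ --- this is in turn equivalent to $\tilde{x}(k)\in[\hat{x}_{min}(k)-R_k,\ \hat{x}_{max}(k)+R_k]$; writing $\tilde{x}(k)=x(k)+e_k$ as in \eqref{error} and rearranging gives precisely $e_k\in[\ubar{e}_k,\bar{e}_k]$ with the $\ubar{e}_k$, $\bar{e}_k$ of the statement.

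To close the ``iff'' I would then chain over $k$. For the forward direction, if the attack is stealthy then $C(k-1)\le\eta$ for every $k\in T_a$ (from the definition together with $C(k_s-1)\le\eta$), so $R_k\ge0$ and the equivalences above give $e_k\in[\ubar{e}_k,\bar{e}_k]$. For the converse, assume $e_k\in[\ubar{e}_k,\bar{e}_k]$ for all $k\in T_a$ and induct on $k$: the base case $C(k_s-1)\le\eta$ holds by hypothesis, and if $C(k-1)\le\eta$ then $R_k\ge0$, so running the equivalences backward from $e_k\in[\ubar{e}_k,\bar{e}_k]$ yields $C(k-1)+z(k)\le\eta$, hence $C(k)\le\eta$. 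Reassembling over all coordinates $i$ produces the vector statement.

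I expect the main obstacle to be the recursive character of the characterization: the interval $[\ubar{e}_k,\bar{e}_k]$ depends on $C(k-1)$, $\hat{x}_{min}(k)$ and $\hat{x}_{max}(k)$, all of which are themselves functions of the earlier errors $e_{k_s},\dots,e_{k-1}$ through the corrupted measurements. Hence the ``iff'' is a nested condition, not a product of independent per-step constraints, and the converse genuinely needs the induction so that $R_k\ge0$ (equivalently $C(k-1)\le\eta$) can be invoked at each step --- without it the step ``$\mathrm{dist}(\cdot)\le R_k\iff\tilde{x}(k)\in[\hat{x}_{min}(k)-R_k,\hat{x}_{max}(k)+R_k]$'' can fail. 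Secondary points to flag in a careful write-up: $\textbf{Post}_{u_i}$ is treated as an interval in each coordinate (otherwise $[\ubar{e}_k,\bar{e}_k]$ is the convex hull of the true stealthy set), the statement uses $\eta_i,b_i\ge0$, and attention is restricted to $k\in T_a$ as in the lemma, the post-attack evolution being already determined by $e_T$.
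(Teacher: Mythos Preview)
Your proposal is correct and follows essentially the same route as the paper's proof: both unwind the CUSUM update coordinate-wise, use that $(C(k-1)+z(k))^+\le\eta$ iff $C(k-1)+z(k)\le\eta$ for $\eta\ge0$, identify the infimum in \eqref{detectorz} as the distance to the interval $[\hat{x}_{min}(k),\hat{x}_{max}(k)]$, and solve the resulting inequality for $e_k$. Your write-up is in fact more careful than the paper's on two points the paper leaves implicit---the inductive chaining over $k\in T_a$ so that $R_k=\eta+b-C(k-1)\ge0$ is available at each step, and the recursive dependence of $[\ubar{e}_k,\bar{e}_k]$ on earlier errors---so nothing needs to change.
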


\begin{proof}
	We show a corrupted vector $\tilde{x}(k)$ bypasses the detector if and only if it is within the stated bound. Let $\tilde{x}(k) \in X$ be a stealthy attack $\Leftrightarrow$ $C(k) < \eta$ $\Leftrightarrow$ $C(k-1) + z(k) < \eta$ $\Leftrightarrow$ $C(k-1) + \inf ||\tilde{x}(k) - \hat{x}(k)|| - b < \eta$ $\Leftrightarrow$ for each $i$ we have $\inf |\tilde{x}_i(k) - \hat{x}_i(k)| < \eta_i + b_i - C_i(k-1)$ $\Leftrightarrow$ $\inf |x_i(k) + e_i(k) - \hat{x}_i(k)| < \eta_i + b_i - C_i(k-1)$. Solving this, we have $ - x_i(k) + \hat{x}_{i,min}(k) -\eta_i - b_i + C_i(k-1) < e_i(k) < - x_i(k) + \hat{x}_{i,max}(k) + \eta_i + b_i - C_i(k-1)$ $\Leftrightarrow$ $\ubar{e}  < e_{k} < \bar{e} $.
\end{proof}

\begin{figure}
	\begin{center}
		\includegraphics[width=8.4cm]{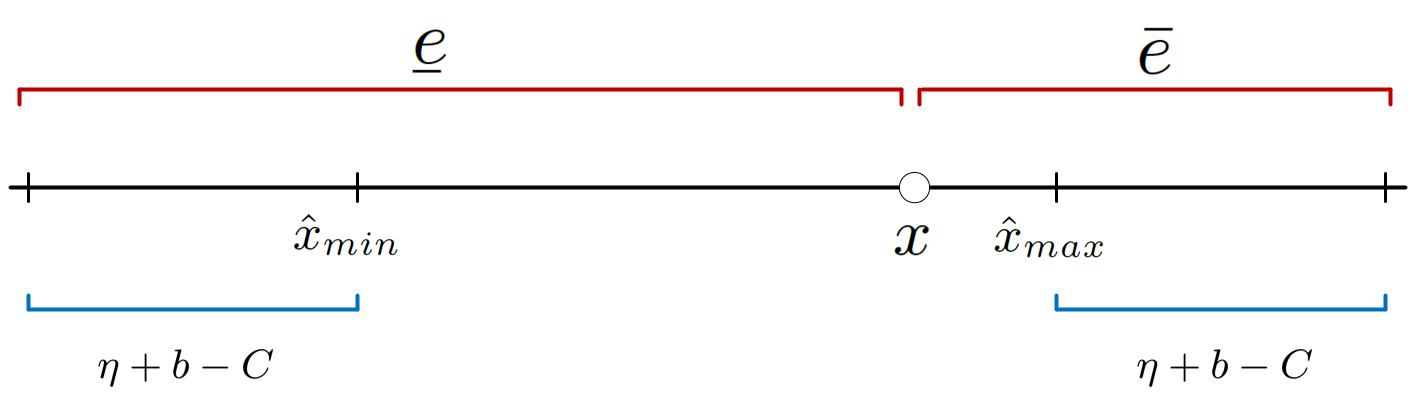}
		\caption{Sensor attack is stealthy if $\tilde{x} \in [\hat{x}_{min} - (\eta + b - C), \hat{x}_{max} + \eta + b -C]$.}
		\label{fig:stealthy}
	\end{center}
\end{figure}


The above lemma characterizes stealthy attacks by deriving a condition for error vector $e_k$ assuming the actual state is known. We use \eqref{error} and Lemma \ref{stealthyattack} to define a set of stealthy attacks $I^s_k \subseteq X$, which contains all the corrupted measurements $\tilde{x}_k$ that are stealthy.




\begin{lemma}\label{lemmastealthyset}
	(Stealthy Attacks Space): A sensor attack with $e_T \in E^{n\times|T|}$ is undetectable iff for all $k \in T_a$, the measurement $\tilde{x}_k$ belongs to stealthy attacks set $I^s_k: X \times U_c \times \mathbb{R} \rightarrow 2^X$ described by
\begin{multline}\label{stealthy}
	I^s_k(\tilde{x}_{k-1},u_{k-1}, C_{k-1})= \\ [\hat{x}_{min,k} - \eta - b + C, \hat{x}_{max,k} + \eta + b -C]
\end{multline}
\end{lemma}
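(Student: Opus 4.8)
The plan is to reduce Lemma \ref{lemmastealthyset} directly to Lemma \ref{stealthyattack} via a change of variables from the attack error $e_k$ to the corrupted measurement $\tilde{x}_k$. The measurement model \eqref{error} gives $\tilde{x}_k = x(k) + e_k$ throughout the attack window, so for each fixed (attacker-known) true state $x(k)$ the map $e_k \mapsto \tilde{x}_k$ is an affine bijection, and in particular $e_k = \tilde{x}_k - x(k)$.

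First I would invoke Lemma \ref{stealthyattack}: the attack is undetectable iff for every $k \in T_a$ we have $e_k \in [\ubar{e}_k, \bar{e}_k]$ with the two endpoints as stated there. Then I would substitute $e_k = \tilde{x}_k - x(k)$ into both inequalities and add $x(k)$ to each side; this cancels the $-x(k)$ term that appears in both $\ubar{e}_k$ and $\bar{e}_k$, leaving $\hat{x}_{min,k} - \eta - b + C_{k-1} \le \tilde{x}_k \le \hat{x}_{max,k} + \eta + b - C_{k-1}$, i.e., membership of $\tilde{x}_k$ in the interval $I^s_k$ of \eqref{stealthy}. Because the substitution is an equivalence (bijection for each $x(k)$) and, as in the proof of Lemma \ref{stealthyattack}, the argument is carried out coordinate-wise in $i$, the "iff" is preserved after stacking the per-vehicle bounds back into vectors, which gives the claim.

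The only point requiring care is the bookkeeping of the arguments of $\hat{x}_{min}$ and $\hat{x}_{max}$: these are $\min$ and $\max$ over $\textbf{Post}_{u_{k-1}}(\tilde{x}_{k-1})$, so they already depend on the previously received (possibly corrupted) measurement $\tilde{x}_{k-1}$ and input $u_{k-1}$, while the slack terms involve the accumulated CUSUM statistic $C_{k-1}$ from \eqref{detectorstat} — this is precisely why $I^s_k$ is written as a map of $(\tilde{x}_{k-1}, u_{k-1}, C_{k-1})$ rather than of the true state, and why the characterization is recursive in $k$. I expect no genuine obstacle here; the lemma is essentially Lemma \ref{stealthyattack} re-expressed in measurement coordinates, and I would state it that way and keep the argument short.
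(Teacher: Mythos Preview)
Your proposal is correct and matches the paper's own treatment: the paper does not give a separate proof of Lemma~\ref{lemmastealthyset} but simply states that it follows from the measurement model \eqref{error} and Lemma~\ref{stealthyattack}, i.e., exactly the change of variables $e_k = \tilde{x}_k - x(k)$ you describe. Your observation that the $-x(k)$ term cancels so that $I^s_k$ depends only on $(\tilde{x}_{k-1}, u_{k-1}, C_{k-1})$ is the whole content of the lemma.
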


\section{Resilient Supervisory Control System}\label{sec4}
In this section, we design a resilient supervisory control system that is secure against stealthy attacks. This is done by taking the following steps:
(1) Designing an estimator system that computes the smallest state estimate compatible with the sequence of control inputs and possibly corrupted measurements received;
(2) Formulating the supervisory control problem given the estimator system with respect to a state space discretization;
(3) Solving the problem by creating a discrete event system (DES) abstraction of the estimator system, translating the control problem to the abstracted domain, solving it, and translating the results back to the continuous domain.

\subsection{Estimator}
We present a method to compute the smallest state estimate compatible with the sequence of control inputs and possibly corrupted measurements received. We first use the assumption on maximum duration of sensor attacks to find a set that surely contains the actual state of the system even if measurements are corrupted. We then present a prediction-correction scheme based on the work in \cite{dallal:imperfect}, which computes a set of predicted states and then corrects it. Finally, we design an estimator system that receives as inputs the measurements, the control inputs, and the detector's statistics and computes the smallest state set that contains the actual state. 

\begin{figure}
	\begin{center}
		\includegraphics[width=8.4cm]{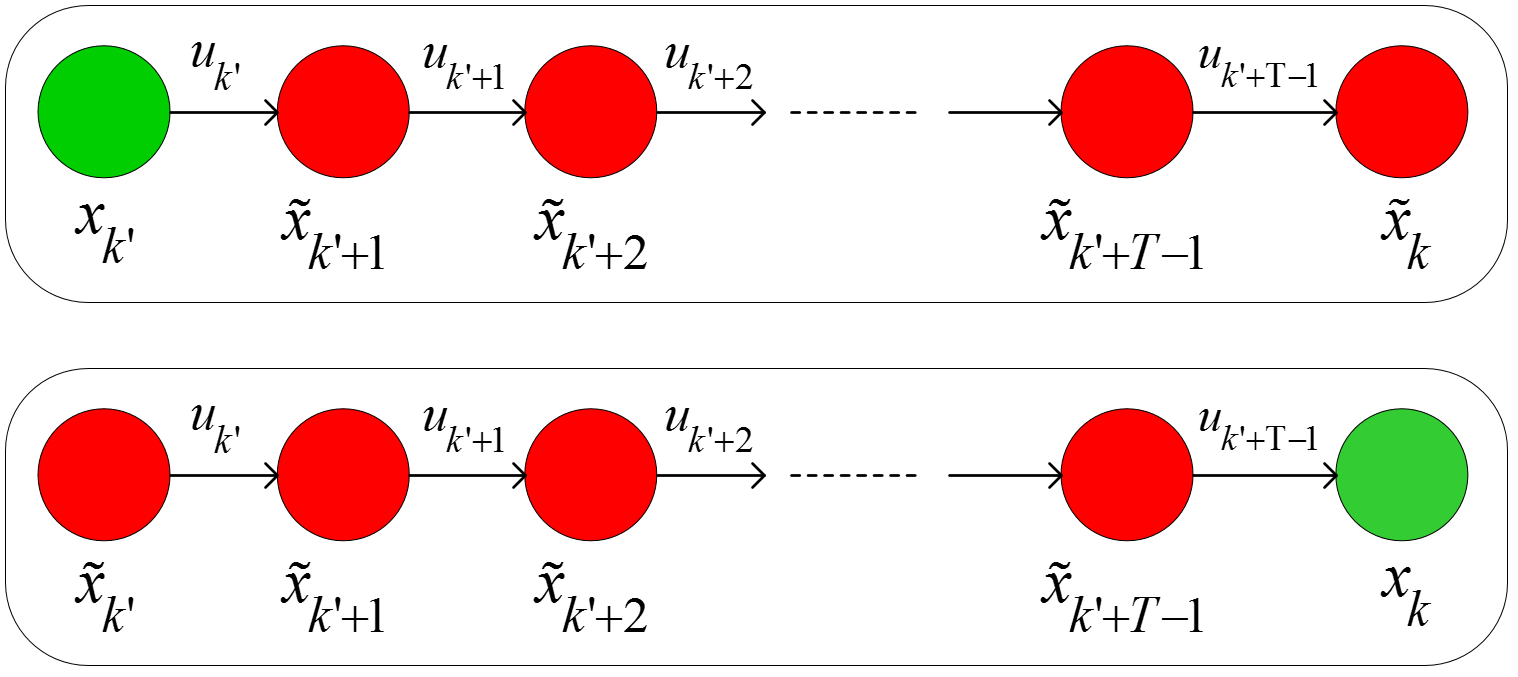}
		\caption{Assuming $k = k' + T_{max}$, either $\tilde{x}_{k'}$ or $\tilde{x}_{k}$ is not attacked.}
		\label{fig:bounded}
	\end{center}
\end{figure}

In the following, we assume $k' = k - T_{max}$ and we define an extension of post-operator as $\textbf{Post}_{u_{c,1}...u_{c,m}}x = x + \sum_{i = 1}^{i = m}u_{c,i} + \bigcup_{u_{uc,i} \in U_{uc}, \delta_i\in\Delta} \sum_{i = 1}^{i = m} u_{uc,i} + \delta_i$. 

\begin{lemma}\label{trustfinder}
	Given measurements $\tilde{x}_{k}$ and $\tilde{x}_{k'}$, and previously chosen controllable inputs $(u_{c,k'}, ..., u_{c,k-1})$, output of function $\hat{I}_k: X \times \underbrace{ U_c \;\; ... \;\; U_c}_{\displaystyle T_{max}} \times X \rightarrow 2^X$ defined by
	\begin{equation}\label{Ihat}
	\hat{I}_k(\tilde{x}_{k'}, u_{k'},...,u_{k},\tilde{x}_{k})=\{ \emph{\textbf{Post}}_{u_{c,k'}...u_{c,k-1} }\tilde{x}_{k'} \} \cup \tilde{x}_{k}
	\end{equation}
	contains the actual state of the system for any timestep $k \in \mathbb{N}$, that is, $x_k \in \hat{I}_k, \; \forall k \in \mathbb{N}$.
\end{lemma}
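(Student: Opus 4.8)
The plan is to exploit the bound $|T_a| \le T_{max}$ together with the standing assumption that the initial measurement is uncorrupted. Fix $k \in \mathbb{N}$. If $k \ge T_{max}$ take $k' = k - T_{max}$ as in the statement, and otherwise take $k' = 0$ with $\tilde{x}_0 = x_0$ playing the role of $\tilde{x}_{k'}$. The pigeonhole observation is that the two indices $k'$ and $k$ cannot both lie in the attack window $T_a = [k_s,k_e)$: if they did, then $k - k' \le (k_e - 1) - k_s = |T_a| - 1 < T_{max}$, contradicting $k - k' = T_{max}$ in the case $k \ge T_{max}$; and in the case $k < T_{max}$ the index $k' = 0$ is uncorrupted by assumption regardless of $T_a$. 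Hence at least one of $\tilde{x}_{k'}$, $\tilde{x}_k$ is not attacked and, by \eqref{error}, equals the corresponding true state. I then split on which one.

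If $\tilde{x}_k$ is uncorrupted, then $x_k = \tilde{x}_k$, and since $\tilde{x}_k$ is one of the two sets unioned on the right-hand side of \eqref{Ihat}, we get $x_k \in \hat{I}_k$ immediately. If instead $\tilde{x}_{k'}$ is uncorrupted, so $\tilde{x}_{k'} = x_{k'}$, I unfold the discrete dynamics \eqref{discrete}. Writing $u_j = (u_{c,j}, u_{uc,j})$ with $u_{c,j}$ the controllable input actually applied at step $j$ and $u_{uc,j} \in U_{uc}$, $\delta_j \in \Delta$ the realized uncontrolled input and disturbance, iterating \eqref{discrete} from $j = k'$ to $j = k-1$ yields
\[
x_k \;=\; x_{k'} + \sum_{j=k'}^{k-1} u_{c,j} + \sum_{j=k'}^{k-1}\bigl(u_{uc,j} + \delta_j\bigr).
\]
By definition of the extended post-operator, $\textbf{Post}_{u_{c,k'}\ldots u_{c,k-1}}\, x_{k'}$ equals $x_{k'} + \sum_j u_{c,j}$ together with the union, over all per-step choices of uncontrolled inputs in $U_{uc}$ and disturbances in $\Delta$, of $\sum_j (u_{uc,j} + \delta_j)$. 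The true trajectory realizes one admissible such choice, so $x_k$ lies in this set, and therefore $x_k \in \textbf{Post}_{u_{c,k'}\ldots u_{c,k-1}}\,\tilde{x}_{k'} \subseteq \hat{I}_k$. In both cases $x_k \in \hat{I}_k$, as claimed.

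The only points that require a little care are bookkeeping: confirming that the combinatorial union in the extended post-operator genuinely ranges over independent per-step pairs $(u_{uc,j}, \delta_j)$, so that the particular realized sequence along the true trajectory is captured; and cleanly handling the boundary regime $k < T_{max}$, where $k'$ is not well defined and must be replaced by the uncorrupted initial index $0$. Neither is a genuine obstacle — the proof is essentially a pigeonhole argument on the length of the attack window combined with forward propagation of the dynamics — so I expect the write-up to be short.
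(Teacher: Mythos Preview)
Your proof is correct and follows essentially the same route as the paper: a pigeonhole argument that $\tilde{x}_{k'}$ and $\tilde{x}_k$ cannot both be attacked since $k-k'=T_{max}\ge |T_a|$, followed by a case split and forward propagation through the dynamics. The only cosmetic difference is the boundary regime $k<T_{max}$: you anchor at the uncorrupted initial state, whereas the paper simply declares $\hat{I}_k = X$ there, which is trivially true.
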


\begin{proof}
	 We directly use the assumption that length of attacks is at most $T_{max}$. As shown in Fig. \ref{fig:bounded}, this implies that $\tilde{x}_{k'}$ and $\tilde{x}_{k}$ may not be corrupted at the same time. Let us consider both cases. If $\tilde{x}_{k}$ is corrupted, $T_{max}$ steps before (i.e., $k - T_{max} = k'$), the received measurement is uncorrupted, that is to say, $\tilde{x}_{k'} = x_{k'}$. Hence, we can write $x_k \in \textbf{Post}_{u_{k'}...u_{k-1}} x_{k'}\subseteq \hat{I}_k$. Next, suppose $\tilde{x}_{k'}$ is corrupted, which means the measurement received at timestep $k = k' + T_{max}$ is accurate, i.e., $\tilde{x}_{k} = x_k$. This is followed by $x_k \in \hat{I}_k$. Thus, $\hat{I}_k$ contains the true state of the system in both cases. 
\end{proof}


Lemma \ref{trustfinder} holds for any two measurements that have a time distance of at least $T_{max}$. This gives rise to definition of the prediction-correction scheme below that recursively computes set of predicted states $I^p \subseteq 2^X$ and set of corrected states $I^c \subseteq 2^X$. We assume the initial state set $X_0$ is not in the bad region. Also, for $k < T_{max}$, we define $\hat{I}_k = X$. 
\begin{equation*}\label{Ic0}
	I_{0}= \{ x \in X_0  \}
\end{equation*}
\begin{equation}\label{Ip1}
	I^{p}_k(I^c_{k-1}, u_{c,k-1})= \bigcup_{x \in I^c_{k-1} , u_{uc}\in U_{uc}, \delta \in\Delta} x + u + \delta
\end{equation}
\begin{equation}\label{estimator}
	I^c_{k} = I^{p}_k \cap \hat{I}_k
\end{equation}

Using the sets defined above and the set of stealthy attacks \eqref{stealthy}, we design an estimator system that receives as inputs the measurements, the control inputs, and the detector's statistics and computes the smallest state set that contains the actual state. 

\begin{definition}
	(\textit{Estimator System}): 
	Estimator system corresponding to system \eqref{ct} employed with the detector \eqref{detectordecision} is
	\begin{equation}\label{I} 
		I_{k}(I^p_k,I^s_k,\tilde{x}_k) = \left\{ \begin{array}{llc} I^c_k & \tilde{x}_k \in I^s_k \\ \textrm{undefined} &  \textrm{else} \end{array}\right.
	\end{equation}
\end{definition}


Note that the set of stealthy attacks $I^s_k$ describes how the detector \eqref{detectordecision} affects the estimator system. In particular, if a measurement $\tilde{x}$ is not in the set of stealthy attacks, i.e., $\tilde{x}_k \not \in I^s_k$, the output of the estimator system is undefined because a detection alert is triggered. On the other hand, if $\tilde{x}$ belongs to the set of stealthy attacks, i.e., $\tilde{x}_k \in I^s_k$, the output of the estimator is the corrected set $I^c_k$ defined by \eqref{estimator}. 

\textit{Remark}: There is a tradeoff between detector's threshold and estimator performance. More specifically, if the detector's threshold increases, the size of stealthy attack space increases as well (Lemma \ref{lemmastealthyset}), which leads to a larger set of estimated states (Equation \eqref{trustfinder}). Consequently, more restriction needs to be enforced by the supervisory controller to handle this larger set of states.



\subsection{Supervisory Control Problem}

To formulate the supervisory control problem, we first transform the infinite state space of the system model to a finite set of discrete states. Similar to \cite{dallal:discrete}, we define a set of discrete states $Q$ and a mapping $\ell : X \rightarrow Q$ as follows:
\begin{equation}\label{ell}
\ell_i(x_i) := \left\{ \begin{array}{ccc}
c\tau\mu, \textrm{ for } c\in \mathbb{Z} \textrm{ s.t. } \\ c\tau\mu - \tau\mu/2 < x_i \leq c\tau\mu + \tau \mu / 2 & x_i \leq \beta_l \\ q_{i,m} &  x_i > \beta_l
\end{array}\right.
\end{equation}
where $l$ is the index of vehicle $i$'s road $(i \in \Pi_l)$. We define $\ell(x)$ as the vector $(\ell(x_1), ...,\ell(x_n))$. Further, we assume that for all $q \in Q$, there exists some $x\in X$ such that $\ell(x) = q$.

Let $2^X/\ell$ denote the quotient set of $2^X$ by the equivalence classes induced by $\ell$. The supervisory control problem is stated below:


\begin{problem}\label{problemdes}
	Given $2^X/\ell$, design a supervisor $\sigma: 2^{X}/\ell \rightarrow 2^{V_c}$ that associates to each $I_k \in 2^X$ a set of inputs $v_c \in V_c$ allowed for the interval $[k\tau,(k+1)\tau]$ with the following properties:
\textit{(i}) if $v_c(t) \in \sigma(I_{k})$, then the trajectory of the system is safe in the interval $[k\tau,(k+1)\tau]$.
\textit{(ii}) if $v_c \in \sigma(I_{k})$ and $\ell(I_{k+1}) \neq {q_m}$, then $\sigma(I_{k+1}) \neq \emptyset$.
\textit{(iii}) if $\tilde{\sigma} \neq \sigma_R$ and $\tilde{\sigma}$ is non-blocking and safe, then $\tilde{\sigma}(I_{k}) \subseteq \sigma_R(I_{k})$.
\end{problem}

\subsection{Resilient Supervisor Design}
We construct a DES abstraction of the estimator system and solve the supervisory control problem using the abstracted model. More specifically, we take the following steps:
(1) Defining a DES with a set of events corresponding to controlled inputs, uncontrolled inputs, disturbance, prediction-correction, and detection such that the uncontrolled inputs and the disturbance events are unobservable.
(2) Constructing an observer of the DES and providing conditions under which the observer is an abstraction of the estimator system
(3) Translating safety and non-blockingness specifications to the DES domain, solving the supervisory control problem given the observer, and translating the obtained supervisor to the continuous time domain.
\subsubsection{DES Model}

We define a DES with an event set that models controlled inputs, uncontrolled inputs, disturbance, prediction-correction, and detection. This is similar to the approaches used in \cite{dallal:discrete} and \cite{dallal:imperfect}. Consider a five-layer event set $E = \Lambda^d \times \Lambda^c \times U_c \times U_{uc} \times W$. The event sets $U_c$ and $U_{uc}$ contain inputs of controlled and uncontrolled vehicles and are defined as described in Section \ref{sec2}. The set $W$ represents a discretization of the set of disturbances $\Delta$, defined by $W = \{k\tau\mu: \lfloor\delta_{min}/(\tau\mu)\rfloor \leq k \leq \lceil\delta_{max}/(\tau\mu)\rceil \}$. The sets $\Lambda^c$ and $\Lambda^d$, which respectively represent prediction-correction and detection, are as defined below.

$\boldsymbol{\Lambda^c}$: Let any two inputs of $\hat{I}(\cdot)$ be equivalent, if they result in the same discrete set with respect to $\ell(\hat{I}(\cdot))$. That is to say, given \eqref{Ihat} and \eqref{ell}, define the equivalence relation $(x_1, u_1...u_{T-1}, x_T) \equiv (x_1', u_1'...u_{T-1}', x_T')   \Leftrightarrow \\ \ell(\hat{I}(x_1, u_1...u_{T-1}, x_T)) = \ell(\hat{I}(x_1', u_1'...u_{T-1}', x_T'))$. Define $\phi := (x_1, u_1...u_{T-1}, x_T)$ and $\Phi := X \times U^{T} \times X$. For any $\phi \in \Phi$, let $[\phi]$ denote the equivalence class of $\phi$ under relation $\equiv$ and let
$\Lambda^c$ represent the set of equivalence classes. These equivalence classes form a finite set of discrete events $\lambda^c \in \Lambda^c$.

We say upon occurrence of event $\lambda^c$, there exists a transition from a state $q$ to itself if and only if there exist some continuous element $\phi$ whose equivalence class is $\lambda^c$ \cite{dallal:imperfect}. To represent this, define the discrete function $I^{\Lambda^c}(\lambda^c) : \Lambda^c \rightarrow 2^{Q}$ as
\[ I^{\Lambda^c}(\lambda^c) = \{ q \in Q : (\exists \phi \in \Phi : [\phi] = \lambda^c)[\hat{I}(\phi) \cap \ell^{-1}(q) \neq \emptyset] \} \]

Then, define the transition function $\psi^c : Q \times \Lambda^c \rightarrow Q$, which is shown to be the analogue of the estimate correction \eqref{estimator} in the DES domain, as
\begin{equation*}
\psi^c(q, \lambda^c) = \left\{ \begin{array}{ll}
q & \textrm{  if } q \in I^{\Lambda^c}(\lambda^c)  \\ \textrm{undefined} & \textrm{else} \\
\end{array}\right.
\end{equation*}


$\boldsymbol{\Lambda^d}$: Let any two elements $(x, u, C)$ and $(x', u', C')$ be in the same equivalence class if their sets of discrete states with respect to $\ell(I^s(\cdot))$ are the same. That is, given \eqref{stealthy}, let \\$ (x, u, C) \equiv (x', u', C') \Leftrightarrow \ell(I^s(x, u, C)) = \ell(I^s(x', u', C'))$. Define $\theta := (x, u, C)$, and $\Theta := X \times U \times \mathbb{R}$. For any $\theta \in \Theta$, let $[\theta]$ denote the equivalence class of $\theta$ under this relation. Denote by $\Lambda^s$ the set of such equivalence classes. The finite set $\Lambda^d$ is defined by $\Lambda^d := \Lambda^s \times Q$.



To define the transition function corresponding to $\Lambda^d$, define the function $I^{\Lambda^d} : \Lambda^s \times Q \rightarrow \{H_0,H_1\}$, for any $\tilde{q} \in Q$, by
\begin{equation*}
I^{\Lambda^d}(\lambda^s, \tilde{q}) = \left\{ \begin{array}{ccl}
H_0 & \exists \theta \in \Theta:[\theta] =\lambda^s, I^s(\theta) \cap \ell^{-1}(\tilde{q}) \neq \emptyset  \\ H_1 & \textrm{else} \\
\end{array}\right.
\end{equation*}

 The transition function $\psi^d : Q \times \Lambda^d \rightarrow Q$, which corresponds to the detector in the DES domain, is defined as
\begin{equation*}
\psi^d(q, \lambda^d) = \left\{ \begin{array}{lll}
q & \textrm{ if } I^{\Lambda^d}(\lambda^d) = H_0  \\ \textrm{undefined} & \textrm{ else } \\
\end{array}\right.
\end{equation*}
where $I^{\Lambda^d}(\lambda^d) \not = H_0$ represents detection of attacks. 

Define the DES
\begin{equation}\label{myDES}
\tilde{G} := (\tilde{Q}, E, \psi , q_0, Q_m)
\end{equation}
with state set $\tilde{Q}$, event set $E$, transition function $\psi: Q \times E \rightarrow Q$, set of initial states $Q_0 \subset Q$, and set of marked states $Q_m = \{q_m\}$. Transition function $\psi$ has the form $\psi(q,\lambda^d,\lambda^c,u_c,u_{uc},w) = \\ \psi_3(\psi_2(\psi_1(\psi^c(\psi^d(q,\lambda^d),\lambda^c),u_c),u_{uc}),w)$, and the language of this system is $(\Lambda^d\Lambda^c U_cU_{uc}W)^*$. To avoid state conflicts, we define intermediate states $Q'$, $Q''$, $Q_{I1}$, and $Q_{I2}$ such that $\psi^d: Q \times \Lambda^d \rightarrow Q'$, $\psi^c: Q' \times \Lambda^c \rightarrow Q''$, $\psi_1: Q'' \times U_c \rightarrow Q_{I1}$, $\psi_2: Q_{I1} \times U_{uc} \rightarrow Q_{I2}$, and $\psi_3: Q_{I2} \times W \rightarrow Q$. The uncontrolled inputs $U_{uc}$ and the disturbances $W$ are not directly observed so we define them as uncontrollable and unobservable. Further, we let the controllable inputs $U_c$ be observable and controllable, and $\Lambda^c$ and $\Lambda^d$ be observable but uncontrollable \cite{dallal:imperfect}.

\subsubsection{Observer}

System $\tilde{G}$ is a discrete event system with unobservable events and we can design its observer $Obs(\tilde{G})$ \cite{cassandras:introduction}. The states of the observer are information states $\iota \subseteq 2^{\tilde{Q}}$ and its transition functions are: $\bar{\psi}^d: 2^Q \times \Lambda^d \rightarrow 2^{Q'}$, $\bar{\psi}^c: 2^{Q'} \times \Lambda^c  \rightarrow 2^{Q''}$, and $\bar{\psi}^p: 2^{Q''} \times U_c \rightarrow 2^Q$. Note that by construction, all controllable events of $Obs(\tilde{G})$ are observable and thus, an optimal supervisor solution indeed exists.

We show that if $\delta_{min}$ and $\delta_{max}$ are multiples of $\mu$, the DES observer automaton constructed above is an abstraction of \eqref{I} and its controller can be used to control the continuous system. We use the notions of simulation and alternating simulation \cite{tabuada:verification}. In what follows, consider the observation maps $H_{2^X}(I) = \ell(I)$ and $H_{2^Q}(\iota) = \iota$, and the relation $R = \{(\iota,I) \in 2^Q \times 2^X:\iota = \ell(I) \}$.

\begin{lemma}\label{GsimI}
	If $\delta_{min}$ and $\delta_{max}$ are multiples of $\mu$, system \eqref{I} alternatingly simulates $Obs(\tilde{G})$, and $Obs(\tilde{G})$ simulates system \eqref{I}.
\end{lemma}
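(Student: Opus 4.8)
The plan is to work with the relation $R=\{(\iota,I)\in 2^Q\times 2^X:\iota=\ell(I)\}$ and the observation maps $H_{2^X}(I)=\ell(I)$, $H_{2^Q}(\iota)=\iota$ fixed just before the statement, and to check, in the sense of \cite{tabuada:verification}, (i) that $R$ is a simulation relation establishing ``$Obs(\tilde G)$ simulates \eqref{I}'', and (ii) that the same $R$ witnesses ``\eqref{I} alternatingly simulates $Obs(\tilde G)$''. One relation serves both directions; only the quantifier pattern over the controllable event $u_c$ versus the uncontrollable/measurement events $u_{uc}$, $w$, $\lambda^c$, $\lambda^d$ changes. Since $\ell$ is surjective onto $Q$ by assumption and $\ell(X_0)=Q_0$, $R$ is total where needed, so it suffices to verify the one-step transfer conditions; on related states the observation maps agree automatically from the definition of $R$.

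The technical core is an \emph{exactness lemma}: under the hypothesis that $\delta_{min}$ and $\delta_{max}$ are multiples of $\mu$, the discretization $\ell$ commutes with the three operations that build \eqref{I}. Concretely I would establish (a) $\ell(I^p_k)=\bar\psi^p(\ell(I^c_{k-1}),u_c)$, i.e. $\ell\!\big(\bigcup_{x\in I^c_{k-1},\,u_{uc}\in U_{uc},\,\delta\in\Delta}x+u+\delta\big)=\bigcup_{q'\in\ell(I^c_{k-1}),\,u_{uc},\,w\in W}\psi_3(\psi_2(\psi_1(q',u_c),u_{uc}),w)$; (b) $\ell(\hat I_k)=I^{\Lambda^c}(\lambda^c)$ for the class $\lambda^c=[\phi]$ of the actual continuous tuple $\phi=(\tilde x_{k'},u_{k'},\dots,u_{k-1},\tilde x_k)$, which is essentially the definition of $\Lambda^c$ and $I^{\Lambda^c}$ once one knows $\ell$ distributes over the union-intersection in \eqref{Ihat}; (c) $\ell(I^p_k\cap\hat I_k)=\ell(I^p_k)\cap\ell(\hat I_k)$, so that $\ell(I^c_k)$ is obtained cellwise as $\psi^c(\bar\psi^d(\cdot,\lambda^d),\lambda^c)$; and (d) $\tilde x_k\in I^s_k\iff\ell(\tilde x_k)\in\ell(I^s_k)\iff I^{\Lambda^d}(\lambda^s,\ell(\tilde x_k))=H_0$, with $\lambda^s=[\theta]$, $\theta=(\tilde x_{k-1},u_{k-1},C_{k-1})$, which makes the ``undefined'' branch of \eqref{I} coincide with the ``undefined'' branch of $\psi^d$ (hence of $\psi$). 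The multiples-of-$\mu$ hypothesis is exactly what upgrades inclusion to equality in (a) and (c): the endpoints of the continuous reachable intervals differ from cell centers by integer multiples of $\tau\mu$, so $I^p_k$, $\hat I_k$ and $I^s_k$ are aligned unions of whole cells and $\ell$ then commutes with $\cap$; without it one only gets $\ell(A\cap B)\subseteq\ell(A)\cap\ell(B)$. The degenerate range $k<T_{max}$, where $\hat I_k:=X$ and $\lambda^c$ is the trivial class with $I^{\Lambda^c}(\lambda^c)=Q$, is immediate.

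Granting the exactness lemma, ``$Obs(\tilde G)$ simulates \eqref{I}'' follows: given $(\iota,I_{k-1})\in R$ and a transition of \eqref{I} induced by a control $u_c$ and an adversarial realization of $\tilde x_k$, $u_{uc}$, $\delta$, either the transition is undefined (because $\tilde x_k\notin I^s_k$) and there is nothing to match, or else letting $\lambda^d=(\lambda^s,\ell(\tilde x_k))$ and $\lambda^c=[\phi]$ be the induced events and $w\in W$ a discrete disturbance compatible with $\delta$, (d) gives $I^{\Lambda^d}(\lambda^d)=H_0$ so $\bar\psi^d$, $\bar\psi^c$, $\bar\psi^p(\cdot,u_c)$ are all defined and by (a)--(c) carry $\iota$ to $\ell(I_k)$, so $(\iota_k,I_k)\in R$. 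For ``\eqref{I} alternatingly simulates $Obs(\tilde G)$'' the same $R$ is used with the quantifiers arranged as in the definition of alternating simulation \cite{tabuada:verification}: at related states any controllable $u_c$ available in $Obs(\tilde G)$ is available to the plant (controls are unconstrained in \eqref{ct}) so the controllable move transfers, and conversely every uncontrollable continuation --- a choice of $\lambda^d$, $\lambda^c$, and implicitly of $u_{uc}$, $w$ --- that $Obs(\tilde G)$ exhibits is realized by a genuine measurement $\tilde x_k$, disturbance $\delta$, uncontrolled input $u_{uc}$ and continuous witnesses $\phi\in\Phi$, $\theta\in\Theta$, which exist precisely by the existential clauses in $I^{\Lambda^c}$, $I^{\Lambda^d}$ and by surjectivity of $\ell$; (a)--(d) then place the successor pair back in $R$.

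I expect the exactness lemma to be the main obstacle, and within it part (c): since $\ell$ is a quotient map, $\ell(A\cap B)\subseteq\ell(A)\cap\ell(B)$ is free but the reverse inclusion is not, and recovering it forces one to exploit the product-of-intervals shape of $I^p_k$, $\hat I_k$ (and $I^s_k$) together with the arithmetic of the hypothesis ($\delta_{min},\delta_{max}$ multiples of $\mu$, speeds in $\{\mu a,\dots,\mu b\}$, cell side $\tau\mu$) to certify that the operands are aligned unions of cells. A secondary, more clerical obstacle is keeping the five-layer event order $\Lambda^d\Lambda^c U_c U_{uc} W$ consistent across $\tilde G$, its observer, and the estimator recursion \eqref{Ip1}--\eqref{estimator}, and matching the two notions of ``undefined'' --- a detection alert in \eqref{I} versus $I^{\Lambda^d}\neq H_0$ in $\psi^d$; this portion runs parallel to the constructions in \cite{dallal:discrete,dallal:imperfect} and can largely be transplanted.
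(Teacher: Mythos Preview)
Your proposal is correct and follows essentially the same route as the paper: the same relation $R=\{(\iota,I):\iota=\ell(I)\}$ and observation maps are used, and the transfer of transitions reduces to the commutation identities $\bar\psi^p(\ell(I),u_c)=\ell(I^p)$ and $\ell(I^p)\cap\ell(\hat I)=\ell(I^p\cap\hat I)$, with the multiples-of-$\mu$ hypothesis supplying the latter. The paper's argument compresses your ``exactness lemma'' into a two-line chain (invoking \cite{dallal:cps} for part~(a) and asserting part~(c) without further comment), whereas you spell out parts (a)--(d) and correctly flag the intersection commutation (c) as the non-trivial step; your treatment of the detector branch~(d) and the $k<T_{max}$ degenerate case is more explicit than the paper's, but the underlying skeleton is the same.
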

\begin{proof}
	(System \eqref{I} alternatingly simulates $Obs(\tilde{G})$): 1) By assumption, for every $\iota_0 \subseteq 2^{Q_0}$ there exists $I_0 \subseteq 2^X_0$ with $\ell(I_0) = \iota_0$;	2) By definition of the observation maps, for every $\ell(I) = \iota$ we have $H_{2^Q}(\iota) = H_{2^X}(I)$; 3) We need to prove for every $\iota = \ell(I)$ and for every $u \in U_{2^Q}(\iota)$ there exists $u' \in U_{2^X}(I)$	such that for every $(I,u,I')$	there exists $(\iota,u',\iota')$	satisfying $\ell(I') = \iota'$. Consider any transition $(I, u, I')$ and let $u = u'$. The sets $I$ and $\iota$ can be written as $I = [x_{l}, x_{h}]$ and $\iota = \ell(I) = \{[q_{l}, q_{h}]\}$. Because $\delta_{min}$ and $\delta_{max}$ are multiples of $\mu$, we have $\bar{\psi}^p(\iota, u) = \ell(I^p)$ \cite{dallal:cps}. Then, $\iota' = \bar{\psi}^c(\bar{\psi}^d(\bar{\psi}^p)) = \ell(I^p) \cap \ell(\hat{I}) = \ell(I^p \cap \hat{I}) = \ell(I')$.
	
	($Obs(\tilde{G})$ simulates system \eqref{I}): 
	1) By assumption, for every $I_0 \subseteq 2^{X}$ there exists $\iota_0 \subseteq 2^Q$ with $\ell(I_0) = \iota_0$;
	2) It is similar to alternating simulation;
	3) Let $u = u'$. We know if $\ell(I) = \iota$, then $\ell(I^p) = \bar{\psi}^p(\iota,u_c)$. By definition, $\ell(I') = \ell(I^p \cap \hat{I}(\phi)) $ and by construction there exist $\lambda^c=[\phi]$ and $\lambda^s=[\theta]$ such that $ \ell(I^p \cap \hat{I}) = \bar{\psi}^c(\bar{\psi}^d(\bar{\psi}^p,\lambda^d),\lambda^c) = \iota'$. Hence, for every $(I,u,I')$ there exists $(\iota,u,\iota')$ such that $\ell(I')=\iota'$. \end{proof}

\subsubsection{Supervisor Solution}
To construct a safe, non-blocking, and maximally permissive supervisor at the DES level, we use the solution to the Basic Supervisory Control Problem in the Non-Blocking case (BSCP-NB) as described in \cite{ramadge:supervisory,cassandras:introduction}. The solution
computes the supremal controllable sublanguage of a specification with respect to the language of a system. More specifically, it computes the language $(\mathcal{L}_m(H))^{\uparrow C}$, where $H$ is a specification and $\uparrow C$ is the supremal controllable sublanguage operation \cite{koutsoukos:supervisory}. The BSCP-NB algorithm constructs a supervisor $S$ such that
$\mathcal{L}_m(S/G) = (\mathcal{L}_m(H))^{\uparrow C}$ and $\mathcal{L}(S/G) = \overline{(\mathcal{L}_m(H))^{\uparrow C}}$,
where $S/G$ is the system $G$ controlled by $S$ and $\overline{L}$ denotes the prefix closure of language $L$ \cite{cassandras:introduction}.

We define the transition function, safety specification, and set of marked states for the estimator system and then translate it to DES domain \cite{dallal:imperfect}. The estimator system \eqref{I} defines a transition system as follows:
\begin{multline*}
	(I,u,I')\in \rightarrow \subseteq 2^X \times U_c \times 2^X \textrm{ if } \\ \exists \tilde{x} \in X, I(I^p,I^s,\tilde{x})=I'
\end{multline*}

Define a transition from $(I,u,I')$ as safe if all trajectories of the continuous time system corresponding to this transition are safe:
\begin{multline}\label{safetyspec}
	(I,u,I')\in Safe \subseteq \rightarrow \textrm{ if } \\ \forall x \in I, x' \in I', (x,u,x') \in \rightarrow_{\eqref{discrete}} \Rightarrow (x, u, x') \in Safe_{\eqref{discrete}}
\end{multline}
Further, define the set of marked states as:
\begin{equation}\label{marked}
	I_m := \{I \subseteq 2^X: \ell(I) = {q_m} \}
\end{equation}

The safety specification \eqref{safetyspec} and the set of marked states \eqref{marked} are translated to the DES domain as follows:
\begin{multline}\label{SafeG}
	(\iota, u, \iota') \in Safe_{\tilde{G}} \subseteq 2^Q \times U \times 2^Q \textrm{ if } \\ \forall (I,u,I') \in \rightarrow, \iota = \ell(I), \iota' = \ell(I'): (I,u,I') \in Safe
\end{multline}
\begin{equation}\label{QmG}
	\iota_m := \{\{q_m\}\}
\end{equation}



Solving BSCP-NB for $Obs(\tilde{G})$ with respect to the specifications \eqref{SafeG} and \eqref{QmG}, outputs a safe, non-blocking, and maximally permissive supervisor \cite{cassandras:introduction}. Denote by $S$ this safe, non-blocking, and maximally permissive supervisor. Define the supervisor map
\begin{equation}\label{supervisormap}
	\sigma(I) = \{ u_c / \tau : u_c \in S(\ell(I)) \}
\end{equation}

We call \eqref{supervisormap}, the resilient supervisory controller of system \eqref{ct}.


\begin{theorem}
	The supervisory controller \eqref{supervisormap} that associates a set of admissible inputs $v_{k} \in V_c$ to each $x_k \in X$, allowed for the interval $t = [k\tau, (k+1)\tau]$ solves Probem \ref{problemdes}.
\end{theorem}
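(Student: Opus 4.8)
The plan is to synthesize the supervisor entirely at the DES level and then push its guarantees down to the continuous system \eqref{ct} controlled by \eqref{supervisormap} (denote it $\sigma_R$), using two ingredients. The first is \emph{soundness} of the estimator: starting from $x_0\in I_0=X_0$ and arguing by induction through \eqref{Ip1}--\eqref{estimator}, if $x_{k-1}\in I^c_{k-1}$ then $x_k=x_{k-1}+u+\delta\in I^p_k$ for the realized $u_{uc}\in U_{uc}$, $\delta\in\Delta$, while Lemma \ref{trustfinder} gives $x_k\in\hat I_k$; hence $x_k\in I^p_k\cap\hat I_k=I^c_k=I_k$ whenever $I_k$ is defined. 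When $\tilde x_k\notin I^s_k$ the estimator \eqref{I} is undefined, but by Lemma \ref{lemmastealthyset} this is precisely the event of a detection alert, after which the system switches to the fail-safe mode; so $\sigma_R$ need only be checked on the no-detection sublanguage, on which it is well defined and $x_k\in I_k$ always. The second ingredient is Lemma \ref{GsimI}: the pair ($Obs(\tilde G)$ simulates \eqref{I}; \eqref{I} alternatingly simulates $Obs(\tilde G)$) through $R=\{(\iota,I):\iota=\ell(I)\}$ is exactly what the refinement results of \cite{tabuada:verification}, as used in \cite{dallal:discrete,dallal:imperfect}, require in order to refine a supervisor for $Obs(\tilde G)$ to one for \eqref{I} that inherits safety, non-blockingness, and maximal permissiveness.

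With these in hand, run BSCP-NB on $Obs(\tilde G)$ against \eqref{SafeG} and \eqref{QmG}; this returns $S$ with $\mathcal L_m(S/G)=(\mathcal L_m(H))^{\uparrow C}$ and $\mathcal L(S/G)=\overline{(\mathcal L_m(H))^{\uparrow C}}$, and \eqref{supervisormap} is its continuous-time refinement. For property (i), let $v_c(t)\in\sigma_R(I_k)$, i.e.\ $u_c=v_c\tau\in S(\ell(I_k))$, and consider any resulting estimator transition $(I_k,u_c,I_{k+1})$. The simulation direction of Lemma \ref{GsimI} matches it with $(\ell(I_k),u_c,\ell(I_{k+1}))$ in $S/Obs(\tilde G)$, which BSCP-NB keeps inside $Safe_{\tilde G}$; unwinding \eqref{SafeG} and \eqref{safetyspec} then forces every discrete transition $(x,u_c,x')\in\,\rightarrow_{\eqref{discrete}}$ with $x\in I_k$ into $Safe_{\eqref{discrete}}$. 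Since $x_k\in I_k$, the true transition $(x_k,u_c,x_{k+1})$ is among these, so $Safe_{\eqref{discrete}}$ (which already encodes avoidance of $B$ over the whole interval) yields that the trajectory of \eqref{ct} is safe on $[k\tau,(k+1)\tau]$ in the sense of \eqref{safety}.

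For property (ii), $\mathcal L(S/G)=\overline{(\mathcal L_m(H))^{\uparrow C}}$ means every state reachable in $S/Obs(\tilde G)$ still reaches the marked set $\iota_m=\{\{q_m\}\}$; by the simulation direction $\ell(I_{k+1})$ is reachable in $S/Obs(\tilde G)$, so if $\ell(I_{k+1})\neq q_m$ it is non-marked and, by non-blockingness of $S/Obs(\tilde G)$, has an event enabled that leads toward $\iota_m$, whence $S(\ell(I_{k+1}))\neq\emptyset$ and therefore $\sigma_R(I_{k+1})\neq\emptyset$. For property (iii), let $\tilde\sigma\neq\sigma_R$ be safe and non-blocking. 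Through Lemma \ref{GsimI}, the family of $U_c$-sets that $\tilde\sigma$ enables, pushed to the DES level, generates a marked language that is (a) contained in the specification $\mathcal L_m(H)$, because every continuous transition $\tilde\sigma$ allows is, by safety and \eqref{SafeG}, a safe DES transition; (b) controllable, because a supervisor acting only on $U_c$ never disables the uncontrollable layers $U_{uc}$, $W$, $\Lambda^c$, $\Lambda^d$; and (c) non-blocking in $Obs(\tilde G)$, by non-blockingness of $\tilde\sigma$. By supremality this language lies in $(\mathcal L_m(H))^{\uparrow C}=\mathcal L_m(S/G)$, and comparing enabled-input sets through \eqref{supervisormap} gives $\tilde\sigma(I_k)\subseteq\sigma_R(I_k)$ for all $k$.

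The main obstacle is not a single estimate but the bookkeeping that makes the $Obs(\tilde G)$--\eqref{I} correspondence tight enough to preserve \emph{maximal} permissiveness: one must verify that the observation maps $H_{2^X}$, $H_{2^Q}$ and the relation $R$ are respected by the entire five-layer transition $\psi$ (controlled input, uncontrolled input, disturbance, prediction--correction $\psi^c$, detection $\psi^d$), that the translations \eqref{SafeG} and \eqref{QmG} of $Safe$ and $I_m$ are simultaneously sound and complete so that no admissible input is removed spuriously, and that the detector layer $\Lambda^d$ can be quotiented out on the no-detection sublanguage exactly as in \eqref{stealthy}. Once this is done, the three properties are standard consequences of BSCP-NB correctness, so the genuinely new work is confined to Lemma \ref{trustfinder} (exploiting $T_{max}$ to keep $x_k\in I_k$ under attack) and Lemma \ref{GsimI} (the simulation/alternating-simulation pair for the attacked model).
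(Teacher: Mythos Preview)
The paper states this theorem without proof; it is offered as an immediate consequence of the preceding construction (the estimator \eqref{I}, the abstraction Lemma~\ref{GsimI}, and the BSCP-NB solution on $Obs(\tilde G)$). Your proposal makes explicit precisely the argument the paper leaves implicit---soundness of the estimator ($x_k\in I_k$) via Lemma~\ref{trustfinder} and induction through \eqref{Ip1}--\eqref{estimator}, transfer of safety, non-blockingness, and maximal permissiveness from $S$ to $\sigma_R$ via the simulation/alternating-simulation pair of Lemma~\ref{GsimI} in the style of \cite{tabuada:verification,dallal:discrete,dallal:imperfect}, and the standard BSCP-NB supremality argument for property~(iii)---so your route is the intended one and in fact supplies more detail than the paper itself.
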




\section{Example}\label{sec5}

We consider two controllable vehicles driving on separate roads approaching an intersection. Consider the set of inputs $V_c = \{1, 3\}$, disturbances $d = [0,1]^2$, time-discretization $\tau = 1$, space discretization parameter $\mu = 1$, maximum attack length $T_{max} = 1$, and detector threshold $\eta \rightarrow 0$. Let $x_0 = (1,1)$, and suppose $(\alpha_1, \beta_1) = (\alpha_2, \beta_2) = [9.5 , 12.5]$, so the vehicles collide if at a time $t$, $9.5 \leq x_1(t), x_2(t) \leq 12.5$. A marked state is reached if $x > 12.5$ for both vehicles. 

We define an attack model known as surge attack. In such attacks, the adversary adds the maximum error value that bypasses the detector, that is to say, he keeps the detector's statistic at the threshold without exceeding it \cite{cardenas:process}. This required solving the equation $C(k) + z(k) - b = \eta$, for every $k \in [k_s, k_e)$. Using Lemma \ref{lemmastealthyset}, the resulting attack at $k=k_s$ is
\[ \tilde{x}_{k_s} = \hat{x}_{max, k_s} + \eta + b\]
where $\hat{x}_{max, k_s} = x_{k_s-1} + u_{k_s-1} + d_{max}$. For the remainder of the attack, which is $k \in [k_s + 1, k_e)$, we have
\[ \tilde{x}_{k} = \tilde {x}_{k-1} + u_{k-1} + d_{max} + b \]
Note that we can replace $\hat{x}_{max}$ and $d_{max}$ with $\hat{x}_{min}$ and $d_{min}$ to achieve similar results for negative error values.

To show how sensor attacks compromise safety, we first consider the supervisor designed in \cite{dallal:discrete} as the controller of the intersection. Constructing the supervisor, the admissible inputs for each cell $q \in Q$ are shown in Fig. \ref{fig:diag}. Note that since $n = 2$, the bad set $B$ is bounded. The set of enabled inputs for the initial state is $\tilde{\sigma}(x(0))=\{(1,3),(3,1)\}$. Suppose the vehicles select $u = (1,3)$ and the state evolves to $x(1) = x(0) + u(0) + (0,0) = (2,4)$. The attacker receives the actual measurement $x(1)$ and starts a surge attack on vehicle 2 by transmitting the corrupted vector $\tilde{x}(1) = x(1) + (0, 1) = (2, 5)$. The detector will not trigger an alert because $\tilde{x}(1) \in I^s_1$. The set of admissible inputs for the corrupted state is $\tilde{\sigma}(\tilde{x}(1)) = \{(1,1),(1,3)\}$, whereas for the actual state this set is $\tilde{\sigma}(x(1)) = \{(1,3)\}$. The input $(1,1)$ is incorrectly enabled by the supervisor in the perceived state and hence, selecting it may result in deadlock (i.e., unavoidable collision). 

\begin{figure}
	\begin{center}
		\includegraphics[width=8.6cm]{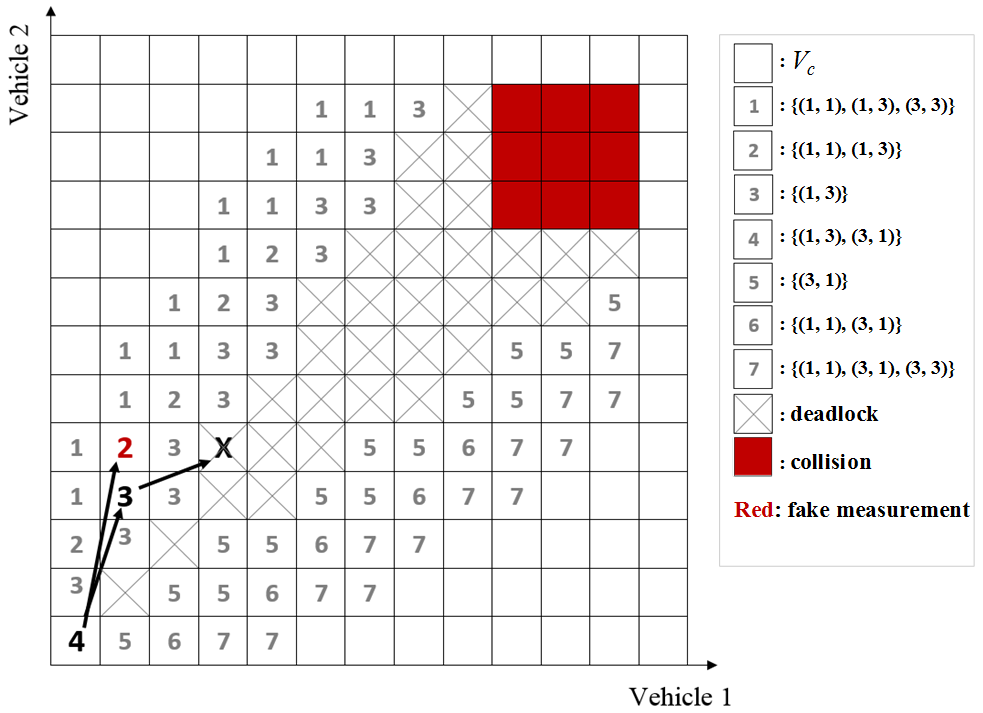}
		\caption{Example of a stealthy sensor attack that compromises safety.} 
		\label{fig:diag}
	\end{center}
\end{figure}

Next, we construct the resilient supervisor for the example and show how the discussed stealthy attack is handled by the supervisor. To implement the resilient supervisor, we need to construct the observer of system $\tilde{G}$ as shown in Fig. \ref{fig:examplesup}. The figure only shows the different choices of inputs for vehicle 2 and input of vehicle 1 is set to $v_1 = 1$. Similar to the previous case, let $u(0) = (1,3)$ be the first chosen input so the information state of the system becomes $(\{2,3\}\{4,5\})$. Assume the actual state is $x(1) = (2,4)$ and the attacker transmits the corrupted measurement $\tilde{x}(1) = (2,5)$. The events $\lambda^c = [\tilde{x}(0),u(0),\tilde{x}(1)]$ and $\lambda^d = [\tilde{x}(0),u(0),C(0)]$ occur and the information state transits to $\iota(1) = (\{2,3\}\{4,5\})$. The input $(1,1)$, which caused collision in the previous case, is disabled in this scenario since if selected, the vehicles may enter any of the illegal states $(4,5), (5,5)$, and $(5,6)$. Therefore, another input is selected (e.g., $u(1) = (1,3)$) and the vehicles remain safe despite the attack.


\begin{figure}
	\begin{center}
		\includegraphics[width=8.8cm]{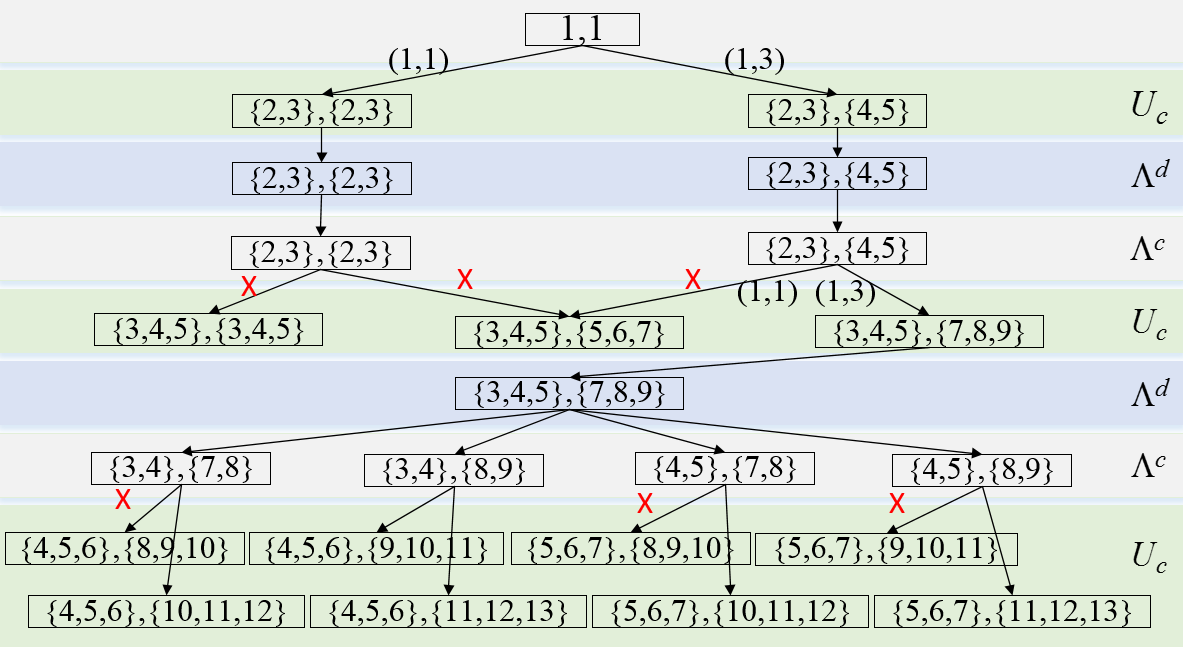}
		\caption{Observer of the two-vehicle intersection.} 
		\label{fig:examplesup}
	\end{center}
\end{figure}

To show how the state evolves in the following steps, we construct the observer for an additional sequence of events as shown in Fig. \ref{fig:examplesup}. Assume the current information state is $(\{3,4,5\}, \{7,8,9\})$. Upon occurrence of an event $\lambda^c$, this information state has 4 possible successors. For instance, if the new measurement is $\tilde{x}(2) = (3,8)$, then $\lambda^c = [(\tilde{x}(1),u(1),\tilde{x}(2))]$ with $\tilde{x}(1) = (2,5)$, $u(1) = (1,3)$, and $\tilde{x}(2) = (3,8)$, and the successor state is $\iota(2) = (\{3,4\},\{8,9\})$. This shows how the measurements $\tilde{x}(1)$ and $\tilde{x}(2)$, and the input $u(1)$ are used to correct the predicted set of states to a subset $\iota(2)$.

To simulate the scenarios, we use Simulation of Urban MObility (SUMO). SUMO is an open source platform for microscopic road traffic simulation \cite{krajzewicz:recent}. The objects of the SUMO simulation can be controlled through Traffic Control Interface (TraCI), which uses a TCP-IP based client/server architecture. We use the TraCI4Matlab implementation of TraCI. 


The simulations and their corresponding trajectories are shown in Fig. \ref{fig:sumo}. We assume vehicle 1 drives the same way in both scenarios. Also, we assume following an unavoidable collision, vehicles keep driving with the same speed. The simulations confirm our previous discussion that the resilient supervisor maintains safety even in the presence of sensor attacks. In this case, this is done by disabling the unsafe input $(1,1)$ at $t = 1$, and enabling $(1,3)$ at subsequent steps.

\begin{figure}
	\centering
	\begin{subfigure}{0.185
			\textwidth}
		\includegraphics[width=\linewidth]{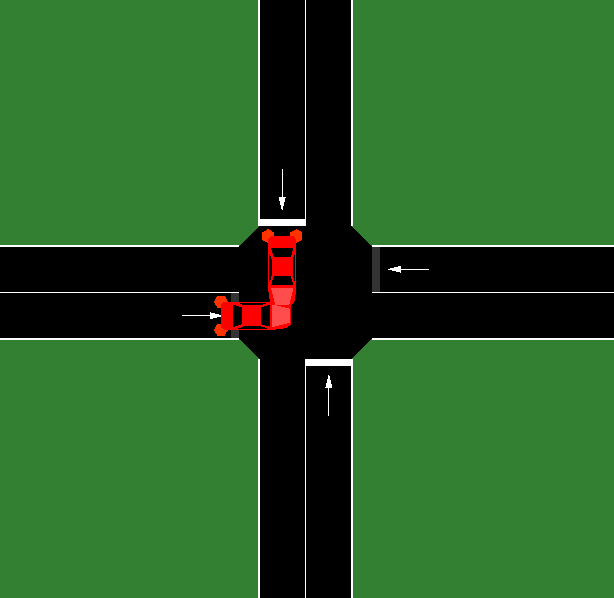}
		\caption{} \label{fig:ex1}
	\end{subfigure}
	~ 
	\begin{subfigure}{0.18\textwidth}
		\includegraphics[width=\linewidth]{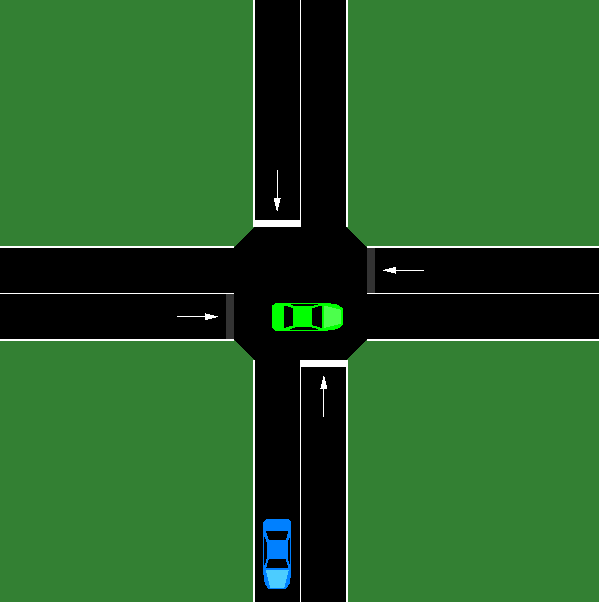}
		\caption{} \label{fig:ex2}
	\end{subfigure}
	~
	\begin{subfigure}{0.21\textwidth}
		\includegraphics[width=\linewidth]{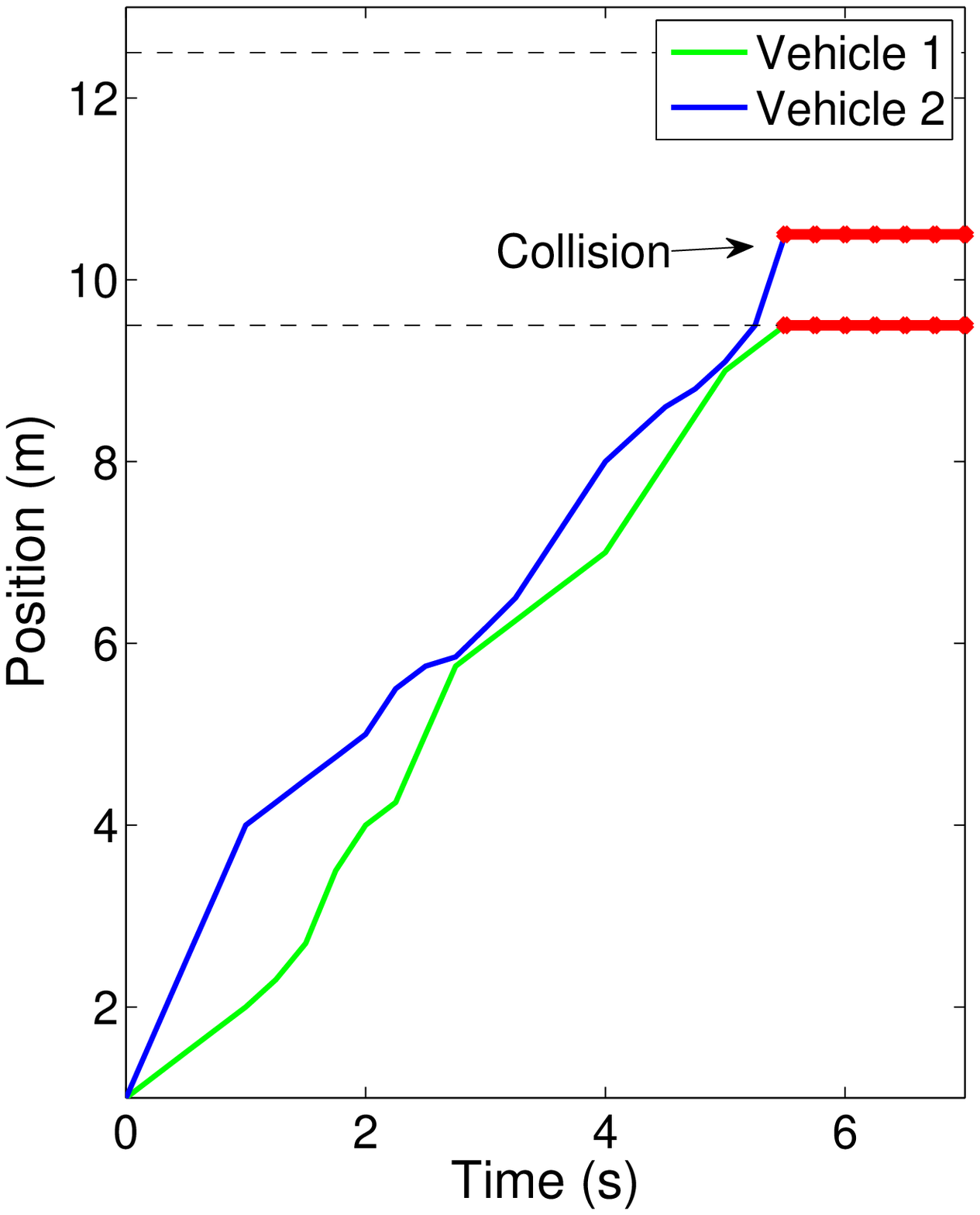}
		\caption{} \label{fig:ex1t}
	\end{subfigure}
	~
	\begin{subfigure}{0.21\textwidth}
		\includegraphics[width=\linewidth]{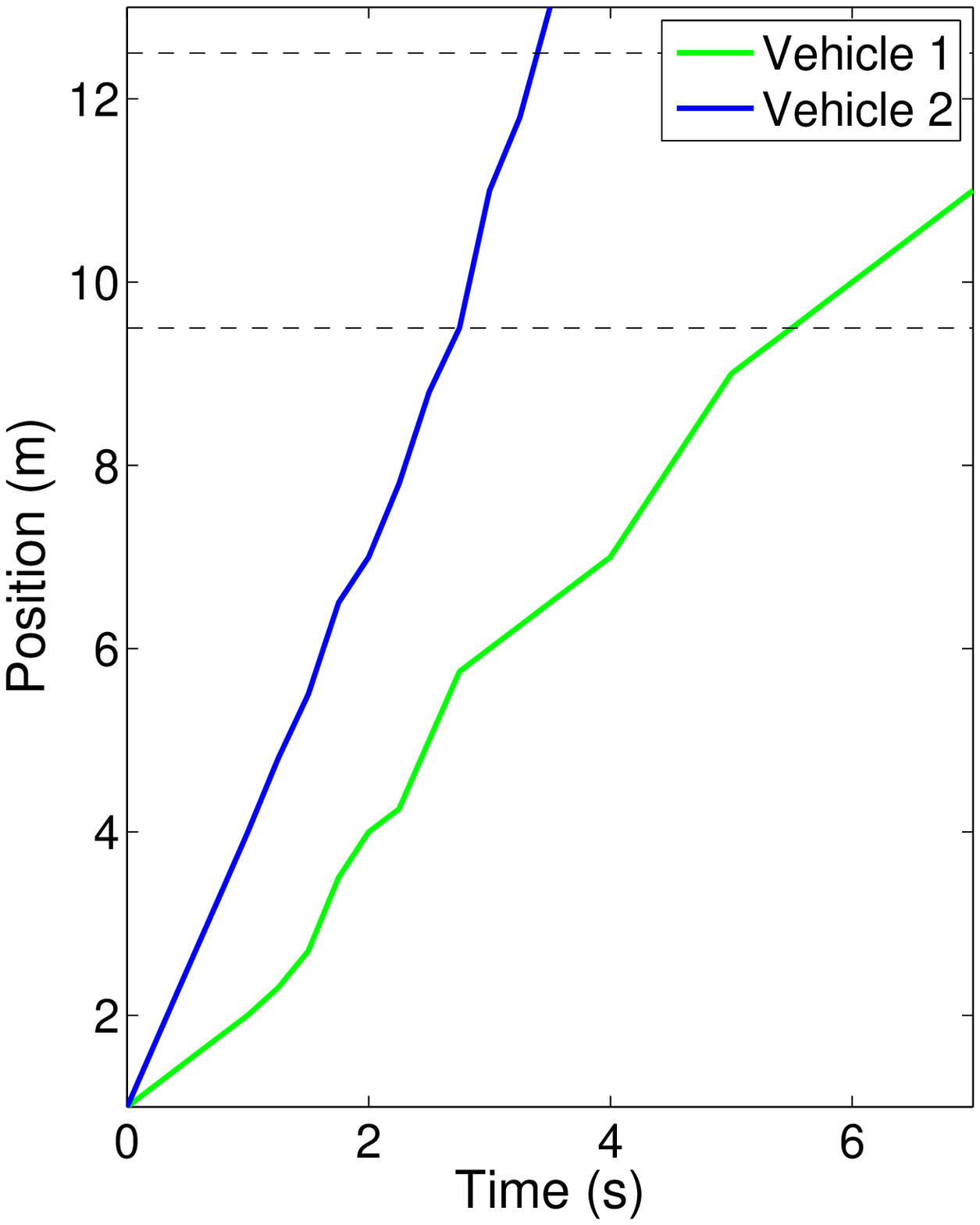}
		\caption{} \label{fig:ex2t}
	\end{subfigure}
	\caption{Simulation of the example in SUMO. Figures \ref{fig:ex1} and \ref{fig:ex1t} correspond to the non-resilient supervisor and Figures \ref{fig:ex2} and \ref{fig:ex2t} correspond to the resilient supervisor. The area between the dashed lines represents the intersection.} \label{fig:sumo}
\end{figure}




Although safety can be maintained for any finite $T_{max}$, the higher $T_{max}$, the more restrictive the supervisor becomes. In other words, there exists a tradeoff between $T_{max}$ and supervisor's performance. To demonstrate this, one can extend the example by considering different values of $T_{max}$. If $T_{max} \geq 4$, inputs $\{(1,3),(3,1)\}$ are enabled at the initial state. Then, until a marked state is reached, the supervisor disables all the inputs except the one chosen at the initial state. This is clearly more restrictive than the case of $T_{max}=1$. Nevertheless, for this example, even if $T_{max} = (x_m-x_0)/(v_{min} + d_{min}) = 12$, a solution exists.

Scenarios including multiple vehicles can be similarly analyzed by verifying safety for each pair of vehicles (as shown in Fig. \ref{fig:multiple} for a different network with $n=5$). The scalability of the proposed resilient supervisory control system depends on the number of vehicles, number of control inputs, minimum and maximum disturbances, discretization parameters, and maximum attack length. Scalability can be characterized based on the computational complexity for implementing supervisors for partially-observed DES \cite{yin:general}.

\begin{figure}
	\centering
	\begin{subfigure}{0.2
			\textwidth}
		\includegraphics[width=\linewidth]{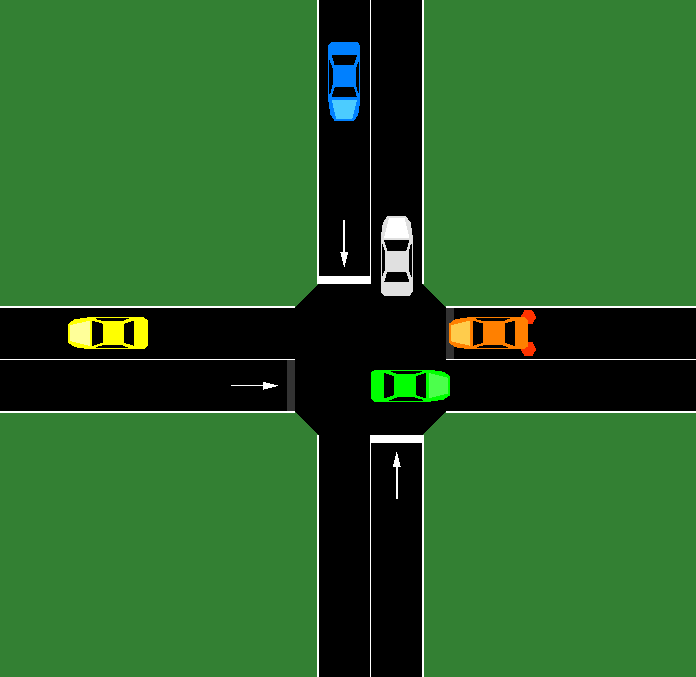}
	\end{subfigure}
	~
	\begin{subfigure}{0.2\textwidth}
		\includegraphics[width=\linewidth]{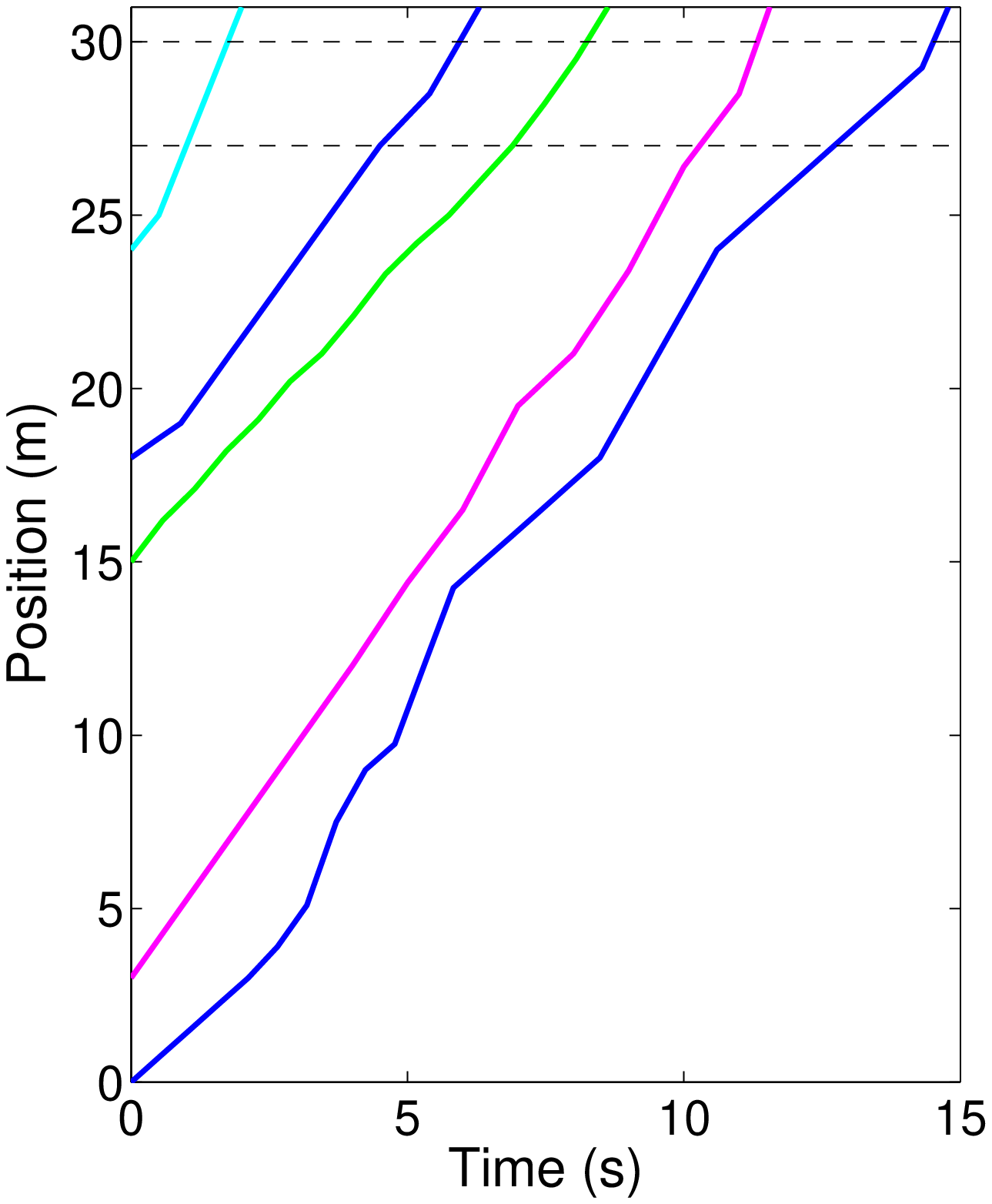}
	\end{subfigure}
	\caption{System remains safe if every pair of vehicles is safe.} \label{fig:multiple}
\end{figure}




\section{Conclusion}\label{sec6}

We studied the problem of supervisory control of autonomous intersections in the presence of sensor attacks. We showed that the supervisory control system is vulnerable to sensor attacks that can cause collision or deadlock among vehicles. To improve the system resilience, we introduced a detector in the control architecture with the purpose of detecting attacks. There exist stealthy attacks that cannot be detected but are capable of compromising safety. To address this issue, we designed a resilient supervisor that maintains safety even in the presence of stealthy attacks. The resilient supervisor consists of an estimator that computes the smallest state estimate compatible with the control inputs and measurements seen thus far. We formulated the resilient supervisory control system and presented a solution. We demonstrated how the resilient supervisor works by considering examples.




Our solution can be extended to handle other attack models. Infinite length attacks and DoS attacks are special cases of the attack model discussed in this work and can be studied in a similar manner. Even though we focused on the analysis of supervisory control of autonomous intersections, our techniques can be applied to other cyber-physical system as well. Future work include study of actuator attacks, development of efficient methods for implementing resilient controllers, and construction of decentralized resilient controllers.

\bibliographystyle{abbrv}
\bibliography{reference}

\end{document}